\def\BibTeX{{\rm B\kern-.05em{\sc i\kern-.025em b}\kern-.08em
    T\kern-.1667em\lower.7ex\hbox{E}\kern-.125emX}}
\newtheorem{theorem}{Theorem}
\newtheorem{assumption}{Assumption}
\newtheorem{remark}{Remark}
\begin{document}

\title{Wireless Feedback Control with Variable Packet Length for Industrial IoT
}
\author{Kang Huang, Wanchun Liu$^\dagger$, Yonghui Li, Andrey Savkin, and Branka Vucetic\\
}

\maketitle
\begin{abstract}
\let\thefootnote\relax\footnote{K. Huang, W. Liu, Y. Li and B. Vucetic are with School of Electrical and Information Engineering, The University of Sydney, Australia.
	Emails:	\{kang.huang,\ wanchun.liu,\ yonghui.li,\ branka.vucetic\}@sydney.edu.au. 
A. Savkin is with School of Electrical Engineering and Telecommunications,  University of New South Wales, Australia.
Email: a.savkin@unsw.edu.au.	
(\emph{Wanchun Liu is the corresponding author.})
}
The paper considers a wireless networked control system (WNCS), where a controller sends packets carrying control information to an actuator through a wireless channel to control a physical process for industrial-control applications. In most of the existing work on WNCSs, the packet length for transmission is fixed. However, from the channel-encoding theory, if a message is encoded into a longer codeword, its reliability is improved at the expense of longer delay. Both delay and reliability have great impact on the control performance. Such a fundamental delay-reliability tradeoff has rarely been considered in WNCSs. In this paper, we propose a novel WNCS, where the controller adaptively changes the packet length for control based on the current status of the physical process. We formulate a decision-making problem and find the optimal variable-length packet-transmission policy for minimizing the long-term average cost of the WNCSs. We derive a necessary and sufficient condition on the existence of the optimal policy in terms of the transmission reliabilities with different packet lengths and the control system parameter.
\end{abstract}

\begin{IEEEkeywords}
Wireless control, age of information, cyber-physical systems, performance analysis, Industrial IoT
\end{IEEEkeywords}

\section{Introduction}
The Industrial Internet of Things (IIoT)
can be treated as an extension of the consumer IoT in industrial applications.
One of the most important applications for IIoT is industrial control~\cite{Schulz} with user scenarios ranging from building and process automation to more mission-critical applications, such as factory automation and power system control~\cite{wollschlaeger2017future}. 
%
Wireless networked control systems (WNCSs) are composed of spatially distributed controllers, sensors and actuators communicating through wireless channels, and physical processes to be controlled. Due
to the enhanced flexibility and the reduced deployment and maintenance costs, WNCS is becoming a fundamental infrastructure technology for mission-critical control applications~\cite{ParkSurvey}.
In \cite{schenato2007foundations}, the optimal control policy and the stability condition of a WNCS were investigated.
In \cite{GatsisOppor}, the optimal transmission scheduling of multiple control systems over shared communication resources were studied.
In \cite{KangJIoT}, the uplink and downlink transmission scheduling problem of a WNCS with a half-duplex controller were considered.
In \cite{demirel2017trade}, an event-triggered WNCS was proposed to reduce the communication cost. {In \cite{wncs1} and \cite{wncs2}, WNCSs with low-power and high-performance multi-hop wireless networks were investigated, respectively.}

In most of the existing works on WNCSs~\cite{schenato2007foundations,GatsisOppor,KangJIoT,demirel2017trade,KangTWC,KangICC,LiuJIoT,LiuGC}, the status of the physical process was discretized by periodical sampling, and the transmission of the controller's  packet was ideally assumed to be fixed and equal to sampling period.
From the theory of channel encoding, if a message is encoded into a longer codeword (with a longer packet length), it can be delivered to the receiver with a higher reliability, but introduces a longer transmission delay on the other side. 
This introduces the fundamental tradeoff in transmission delay and reliability~\cite{Polyanskiy}.
In a WNCS,
the transmission of a short control-information-carrying packet results in a frequent but unreliable control; while the transmission of a long control packet leads to a less timely but more reliable control.
Packet-length design for achieving the optimal control-system performance has rarely been considered in the open literature of WNCSs.

{Moreover, a WNCS is a dynamic system, and the state of the physical process under control changes with time. Naturally, different status of the WNCS can require different levels of reliability and delay of control-packet transmission for achieving the optimal control performance. For some status, reliable transmission is more important, which needs a longer control packet, while for some other status, short-delay transmission is more important, which needs a shorter control packet. Variable-length packet transmission has been proposed and investigated in conventional communication systems~\cite{variable1,variable2}, which, however, has not been considered in WNCSs.}

In this paper, we tackle the packet-length design problem in WNCSs. The major contributions are  summarized as below.

{
	1) We propose a novel WNCS, where the controller is able to adaptively change the packet length for control based on the current status of the physical process.
	We formulate the optimal design problem of variable-length packet transmission policy into a semi-Markov decision process (MDP) problem, which minimizes the long-term average cost function of the WNCS.
	The problem can be solved by a modified relative value iteration algorithm. Numerical results show that the proposed WNCS achieves a much better performance than the conventional system.
	
	2) 
	For the variable-length policy, we derived the stability condition of the WNCS, i.e., the necessary and sufficient condition on the existence of an optimal policy that can stabilize the WNCS (i.e., make the cost function bounded) in terms of the transmission reliabilities with different packet lengths and the control system parameter.
	The analysis is not trivial, since neither the optimal policy nor its long-term average cost function of the WNCS has a closed-form formula. We prove the necessity by constructing a tractable virtual policy that can achieve a better performance than the optimal policy. The sufficiency is proved by analyzing fixed-length policies that achieve a worse performance than the optimal variable-length policy.

	3) We investigate the fixed-length packet transmission policies of the WNCS with different packet lengths, and derive a closed-form stability condition in terms of the packet length, the transmission reliability and the control system parameter. Such the result has not been obtained before in the literature, and will provide an important design guideline for the fixed-length policy in WNCS. 
}

%
%


\vspace{-0.2cm}
\section{System Model} \label{model}
Consider a wireless networked control system, where the discrete-time dynamic physical process $\mathbf{x}_t \in \mathbb{R}^n$, $t\in \mathbb{N}$, is measured by the \emph{controller}, which generates and sends control information to the remote \emph{actuator} to control the process. $\mathbb{N}$ is the set of positive integers.
The evolution of the dynamic process is modeled as a linear time-invariant  system~\cite{schenato2007foundations,GatsisOppor,KangJIoT,demirel2017trade}:
\begin{equation} \label{system_model}
\mathbf{x}_{t+1} = \mathbf{A}\mathbf{x}_t + \mathbf{B}\mathbf{u}_t + \mathbf{w}_t,
\end{equation}
where $\mathbf{u}_{t} \in \mathbb{R}^m$ is the actuator's control input, $\mathbf{w}_{t} \in \mathbb{R}^n$ is
the process disturbance modeled as a zero-mean Gaussian white noise with the covariance $\mathbf{R} \in \mathbb{R}^{n \times n}$, and 
$\mathbf{A} \in \mathbb{R}^{n \times n}$ and $\mathbf{B} \in \mathbb{R}^{n \times m}$ are the system transition matrix and the input matrix, respectively. The discrete time slot has the duration of $T_0$~s, which is also the sampling period of the process. For brevity, we only use the discrete time for analysis in the rest of the paper.

\subsection{Controller-Side Operation}
To deliver the control information to the actuator, the controller converts its control signal into a packet by quantization and channel encoding (i.e., error-control coding).
The communication channel for packet transmission is static for low-mobility industrial control applications~\cite{schenato2007foundations,KangJIoT,demirel2017trade}.
We assume that the quantization noise is negligible due to the sufficiently high number of quantization levels, which is commonly considered in the literature~\cite{schenato2007foundations,GatsisOppor,KangJIoT,demirel2017trade,KangTWC,KangICC,LiuJIoT}.
Since a longer channel-coding blocklength leads to a longer packet with a higher reliability~\cite{Polyanskiy}, the packet error probability is a monotonically decreasing function $g(l)$ in terms of the packet length of $l$ time slots, where $l \in \mathbb{N}$.

The transmission of control information introduces delay, so we adopt a predictive control method for delay compensation~\cite{demirel2017trade,KangJIoT}.
To be specific, given current time $t$ and packet length $l$, since the control information is expected to be delivered at time $(t+l-1)$ and there is no control input until then, the controller optimally predicts the process state $\mathbf{x}_{t+l-1}$ as~\cite{KangJIoT}
\begin{equation} \label{state_estimation}
\mathbf{\hat{x}}_{t+l-1|t} = \mathbf{A}^{l-1}\mathbf{x}_t.
\end{equation}
By adopting a linear control law, the control signal that is generated at time $t$ and to be applied at  time $(t+l-1)$ by the actuator is~\cite{KangJIoT,demirel2017trade}
\begin{equation} \label{predictive_control}
\mathbf{\hat{u}}_{t+l-1|t} = \mathbf{K} \mathbf{\hat{x}}_{t+l-1|t} = \mathbf{K}\mathbf{A}^{l-1}\mathbf{x}_t,
\end{equation}
where $\mathbf{K}\in \mathbb{R}^{m\times n}$ is the constant controller gain. 
\begin{assumption}
	\normalfont
The controller gain has the property that~\cite{KangJIoT} 
\begin{equation}\label{a+bk}
\mathbf{A+BK=0}.
\end{equation}
\end{assumption}
This assumes that the control system is one-step controllable\footnote{The multi-step controllable cases can also be handled by the following problem formulation and performance analysis framework, and the stability conditions in Theorems~\ref{theorem_fixed} and~\ref{theorem_var} remain the same.}, i.e., it can be verified by taking \eqref{predictive_control} into \eqref{system_model} that the state vector $\mathbf{x}_t$ can be driven to zero in one time slot in the absence of the process disturbance with the packet length~$l=1$.
	
\subsection{Actuator-Side Operation}
Let $\gamma_t =1$ denote a successful packet detection at time $t$ and $\tilde{l}_t$ denote the length of the successfully received packet.
Thus, $\gamma_t =0$ denotes the packet has not arrived at $t$ or the detection of the arrived packet at $t$ is failed.

{Now, we introduce the the age-of-information (AoI) at the actuator,  $d_t$, which measures the time duration between the generation time of the most recently received control packet and the current time $t$~\cite{kaul2012real,KangTWC}. Then, it is easy to have the updating rule of $d_t$ as
\begin{equation} \label{packet_update}
d_{t+1} = \begin{cases}
\tilde{l}_t,& \gamma_t =1,\\
d_t + 1,& \text{otherwise}.
\end{cases}
\end{equation}}

The actuator adopts a zero-hold strategy: it remains the zero control input until a control packet is successfully detected~\cite{schenato2007foundations,GatsisOppor,KangJIoT}.
Thus, the actuator's control input $\mathbf{u}_t$ in \eqref{system_model} is given as
\begin{equation}\label{u}
\mathbf{u}_{t} = \begin{cases}
\mathbf{\hat{u}}_{t|t-\tilde{l}_t+1}, & \gamma_t =1 \\
\mathbf{0}, & \text{otherwise.}
\end{cases}
\end{equation}

Taking \eqref{u} into \eqref{system_model} and using the property \eqref{a+bk}, the state covariance matrix can be obtained as 
\begin{equation}\label{Pt}
\mathbf{P}_t \triangleq \mathbb{E}\left[\mathbf{x}_t \mathbf{x}^{\top}_t\right]
=
\mathbf{H}(d_t)
\triangleq \sum_{i=0}^{d_t-1}\mathbf{A}^{i}\mathbf{R}(\mathbf{A}^{i})^{\top},
\end{equation}
where $\mathbb{E}[\cdot]$ is the expectation operator, $(\cdot)^\top$ is the operator of matrix transpose.
Therefore, the state covariance matrix $\mathbf{P}_t$ in \eqref{Pt} depends on the AoI status $d_t$.

\section{Control with Variable-Length Packets}
\subsection{Problem Formulation}
The performance of the control system is measured by the quadratic average cost function as~\cite{schenato2007foundations,KangJIoT,demirel2017trade}
\begin{equation} \label{average_cost}
J = \lim_{T \to \infty}\frac{1}{T}\sum_{t=1}^{T}
\mathbb{E}\left[\mathbf{x}_t^{\top}\mathbf{Q}\mathbf{x}_t\right] 
= \lim_{T \to \infty}\frac{1}{T}\sum_{t=1}^{T}\text{Tr}\left(\mathbf{Q}\mathbf{P}_t\right),
\end{equation}
where $\mathbf{Q}$ is a symmetric positive semidefinite weighting matrix, and $\text{Tr}\left(\mathbf{Q}\mathbf{P}_t\right)$ is the cost function at time $t$.

We define the variable-length packet transmission policy for wireless control: the policy $\pi = \{l_{1},l_{2},..,l_{k},...\}$ is the sequence of the packet lengths during the process control, where $l_k\in\mathbb{N}$ and $k$ is the packet index, as illustrated in Fig.~\ref{fig:process}.
Our problem is to find the optimal policy $\pi^*$ that minimizes the infinite-horizon average cost, i.e.,
\begin{equation} \label{problem}
\pi^\star \triangleq  \underset{\pi}{\mathrm{argmin}} \lim_{T \to \infty}\frac{1}{T}\sum_{t=1}^{T}\text{Tr}\left(\mathbf{Q}\mathbf{P}_t\right).
\end{equation}

\begin{figure}[t]
	\centering
	\includegraphics[scale=0.8]{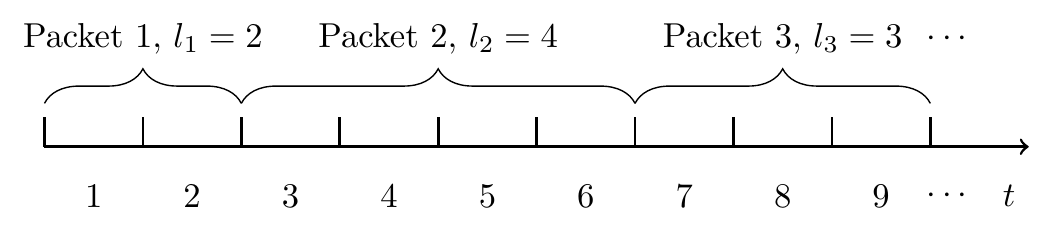}
	\vspace{-0.3cm}
	\caption{Variable-length packet transmission policy.}
	\vspace{-0.5cm}	
	\label{fig:process}
\end{figure}

\subsection{Semi-MDP Solution}\label{sec:semi}
From the definition of $\mathbf{P}_t$ in \eqref{Pt} and the updating rule of the AoI status \eqref{packet_update}, the problem \eqref{problem} can be treated as an adaptive packet-length decision process with two properties: 1) if a decision of packet length $l$ is made at the AoI state $d$, then the each step cost $\text{Tr}(\mathbf{QP}_t)$ depends only on the present AoI state $d$ until the completion of packet transmission;
2) the sum cost incurred until the completion of the packet transmission depends only on the
present AoI state $d$ and the packet length~$l$.

Such an average-cost minimization problem is a typical semi-Markov decision process (semi MDP)~\cite{tijms2003first} modeled as:

1) The state space is defined as $\mathbb{S} \triangleq \{d: d \in \mathbb{N} \}$. The state indicates the AoI at the beginning of a packet transmission.
The state at the beginning of the $k$th packet is denoted as $d_k \in \mathbb{S}$.

2) The action space is defined as $\mathbb{A} \triangleq \{l: l \in \mathbb{N} \}$.
The action $l_k \triangleq \pi(d_k)$ represents the length of the $k$th packet, where, with a slight abuse of notation, $\pi(d_k)$ is the stationary policy function in terms of the current state $d_k$.

3) The state-transition probability $P(d'|d,l)$ characterizes the probability that state transits from $d$ at the beginning of the current packet to $d'$ at the beginning of the next packet with the current action of $l$. As the transition is time-homogeneous, we drop the packet index $k$ here. From the state-updating rule \eqref{packet_update} and the packet error probability function $g(l)$, the state transition probability is:
\begin{equation} \label{transition_function}
P(d'\vert d, l) = \begin{cases}
g(l) &\text{ if } d' = d+l\\
1-g(l) &\text{ if } d' = l.
\end{cases}
\end{equation}

4) The duration time $\delta(d,l)$ characterizes the expected time until the next transmission decision if the action $l$ is chosen at the current state $d$. It is clear that the duration time is determined by the decided packet length in our scenario, i.e., 
\begin{equation} \label{duration}
\delta(d,l) = l.
\end{equation} 

5) The one-stage cost of the semi MDP problem is the sum cost during the current packet transmission, which is given as:
\begin{equation}
c(d,l) \triangleq \sum_{i=d}^{d+l-1} \text{Tr}\left(\mathbf{Q}\mathbf{H}(i)\right).
\end{equation}

From the semi-MDP formulation, the average cost $J$ in \eqref{problem} can be rewritten as:
\begin{equation}\label{problem2}
J = \frac{\sum_{d \in \mathbb{S}}c(d,\pi(d))\phi_{\pi}(d)}{\sum_{d \in \mathbb{S}}\delta(d,\pi(d))\phi_{\pi}(d)},
\end{equation}
where $\phi_{\pi}(d)$ denotes the stationary probability of state $d\in\mathbb{S}$ under policy $\pi$.

Therefore, the optimal policy of problem \eqref{problem}, $\pi^\star(\cdot)$, can be obtained by solving the above semi-MDP problem with the target function \eqref{problem2}.
By using the classical data-transformation method, the semi-MDP problem can be transformed as an MDP problem, and thus can be solved effectively by the classical relative value iteration method~\cite{tijms2003first}.

\subsection{Practical Implementation Issues of Variable-Length Policy}
{Since the control packet length of a WNCS changes with time, each packet header should include the information of the packet length. Thus, comparing with a fixed-length policy, the variable-length policy requires a bit higher communication overhead in practice. Moreover, when considering multiple WNCSs sharing the same wireless resources, it is not applicable to consider a time-division multiple access like medium-access control (MAC) protocol for multi-WNCS scheduling as each WNCS requires dynamic time slot length for transmissions. Therefore, different WNCSs need to be allocated to different frequency channels/sub-carriers when applying the variable-length policy.
Commonly considered error-control codes, including polar codes and turbo codes, can be used for variable-length encoding.}

\section{Stability Condition of the Control System}
If the packet transmissions are very unreliable with different packet lengths, the average cost $J$ in \eqref{problem2} might be unbounded no matter what packet-transmission policy we choose and the semi-MDP problem discussed above might not have a feasible solution, i.e., the control system is \emph{unstable}.

Now, we study the stability condition of the control system by investigating the fixed-length and variable-length packet transmission policies in the sequel, where the latter is a more general case of the former.
\subsection{Fixed-Length Packet Transmission Policy}

\begin{theorem}[Fixed-length scenario] \label{theorem_fixed}
\normalfont	
Consider the control system described by \eqref{system_model}-\eqref{u}. 
Let $(\mathbf{A},\mathbf{\sqrt{R}})$ be controllable and let $(\mathbf{A},\mathbf{\sqrt{Q}})$ be observable.\footnote{$\sqrt{\mathbf{R}}$ and $\sqrt{\mathbf{Q}}$ are the square roots of the positive definite matrices
$\mathbf{R}$ and $\mathbf{Q}$, respectively.	
$(\mathbf{A},\mathbf{\sqrt{R}})$ is controllable and $(\mathbf{A},\mathbf{\sqrt{Q}})$ is observable if  $\left[\sqrt{\mathbf{R}},\mathbf{A}\sqrt{\mathbf{R}},\cdots,\mathbf{A}^n\sqrt{\mathbf{R}}\right]$
and
$\left[\sqrt{\mathbf{Q}}^{\top},\mathbf{A}^{\top}\sqrt{\mathbf{Q}}^{\top},\cdots,(\mathbf{A}^n)^{\top}\sqrt{\mathbf{Q}}^{\top}\right]$
are of full rank.}
Assuming that the packet length is $l$ and fixed during the process control, the dynamic process can be stabilized iff
\begin{equation}\label{fixed}
g(l) \rho^{2l}(\mathbf{A}) <1,
\end{equation}
where $\rho(\mathbf{A})$ is the spectral radius of the matrix $\mathbf{A}$.
\end{theorem}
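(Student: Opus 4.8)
The plan is to reduce the stability question, for a fixed packet length $l$, to the convergence of an explicit scalar series built from the stationary law of the AoI chain, and then to control that series with matched bounds on the growth rate of $\mathrm{Tr}(\mathbf{Q}\mathbf{H}(i))$ as $i\to\infty$.

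First I would analyze the Markov chain on $\mathbb{S}$ induced by $\pi(d)\equiv l$. By \eqref{transition_function}, after the first successful transmission the state always lies in $\{jl:j\in\mathbb{N}\}$, so when $0<g(l)<1$ the chain is irreducible and positive recurrent there, and the balance relations $\phi_\pi((j+1)l)=g(l)\,\phi_\pi(jl)$ together with normalization give the geometric stationary law $\phi_\pi(jl)=(1-g(l))g(l)^{j-1}$ (the initial state affecting only a finite transient). Substituting $\delta(d,l)=l$ and $c(jl,l)=\sum_{i=jl}^{(j+1)l-1}\mathrm{Tr}(\mathbf{Q}\mathbf{H}(i))$ into \eqref{problem2},
\[
J=\frac{1-g(l)}{l}\sum_{j=1}^{\infty}g(l)^{j-1}\sum_{i=jl}^{(j+1)l-1}\mathrm{Tr}\!\left(\mathbf{Q}\mathbf{H}(i)\right),
\]
so the system is stable iff this series is finite. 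The two degenerate channels are handled directly: $g(l)=0$ gives $J=\mathrm{Tr}(\mathbf{Q}\mathbf{H}(l))<\infty$ and satisfies \eqref{fixed}; $g(l)=1$ makes the chain transient with $d_t\to\infty$, so $J<\infty$ iff $\mathrm{Tr}(\mathbf{Q}\mathbf{H}(i))$ stays bounded, which (by the next step) happens iff $\rho(\mathbf{A})<1$, again matching \eqref{fixed}.

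Second I would pin down the growth rate of $\mathrm{Tr}(\mathbf{Q}\mathbf{H}(i))$. For the upper bound, the Jordan estimate $\|\mathbf{A}^m\|_2\le C(m+1)^{n-1}\rho^m(\mathbf{A})$ yields, for every fixed $\tilde\rho>\rho(\mathbf{A})$, a constant $c_1$ with $\mathrm{Tr}(\mathbf{Q}\mathbf{H}(i))\le c_1\tilde\rho^{2i}$ (the polynomial factor being absorbed into the slightly larger base), and when $\rho(\mathbf{A})<1$ the limit $\mathbf{H}(\infty)=\sum_{m\ge0}\mathbf{A}^m\mathbf{R}(\mathbf{A}^m)^\top$ exists, so $\mathrm{Tr}(\mathbf{Q}\mathbf{H}(i))$ is bounded. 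For the lower bound I would use the window identity $\mathbf{H}(i+n)-\mathbf{H}(i)=\mathbf{A}^i\mathbf{H}(n)(\mathbf{A}^i)^\top$ together with the facts that controllability of $(\mathbf{A},\sqrt{\mathbf{R}})$ makes $\mathbf{H}(n)$ positive definite, that observability of $(\mathbf{A},\sqrt{\mathbf{Q}})$ makes $\mathbf{W}_n\triangleq\sum_{k=0}^{n-1}(\mathbf{A}^k)^\top\mathbf{Q}\mathbf{A}^k$ positive definite, and that $\|\mathbf{A}^i\|_2\ge\rho^i(\mathbf{A})$; averaging the one-window inequality over $n$ consecutive shifts and invoking monotonicity of $i\mapsto\mathrm{Tr}(\mathbf{Q}\mathbf{H}(i))$ produces an $i$-uniform constant $c_2>0$ with $\mathrm{Tr}(\mathbf{Q}\mathbf{H}(i+2n-1))-\mathrm{Tr}(\mathbf{Q}\mathbf{H}(i))\ge c_2\,\rho^{2i}(\mathbf{A})$. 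Iterating this gives $\mathrm{Tr}(\mathbf{Q}\mathbf{H}(i))\ge c_3\,\rho^{2i}(\mathbf{A})$ for all $i$ when $\rho(\mathbf{A})>1$, and $\mathrm{Tr}(\mathbf{Q}\mathbf{H}(i))\to\infty$ (at least linearly) when $\rho(\mathbf{A})=1$.

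Finally I would combine the estimates. Sufficiency: if $g(l)\rho^{2l}(\mathbf{A})<1$, then either $\rho(\mathbf{A})<1$ (the cost is bounded irrespective of the channel) or one may choose $\tilde\rho>\rho(\mathbf{A})$ with $g(l)\tilde\rho^{2l}<1$, whence $c(jl,l)\le l\,c_1\tilde\rho^{2(j+1)l}$ dominates the series by a convergent geometric one, giving $J<\infty$. Necessity: $g(l)\rho^{2l}(\mathbf{A})\ge1$ forces $\rho(\mathbf{A})\ge1$; if $g(l)=1$ then $d_t\to\infty$ and $\mathrm{Tr}(\mathbf{Q}\mathbf{H}(d_t))\to\infty$, so $J=\infty$, while if $g(l)<1$ then $\rho(\mathbf{A})>1$ and the lower bound gives
\[
J\ \ge\ \frac{(1-g(l))\,c_3}{l}\,\rho^{2l}(\mathbf{A})\sum_{j\ge1}\big(g(l)\,\rho^{2l}(\mathbf{A})\big)^{j-1}\ =\ \infty .
\]
The main obstacle is the lower-bound step: extracting exactly the rate $\rho^{2i}(\mathbf{A})$ with a constant uniform in $i$ --- so that the borderline case $g(l)\rho^{2l}(\mathbf{A})=1$ is still correctly declared unstable --- is where controllability and observability are genuinely used, and where one must sidestep Jordan-block polynomial factors and possibly complex dominant eigenvalues; the averaged window-increment argument through the two Gramians is what makes this go through cleanly.
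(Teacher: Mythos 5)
Your proposal is correct, and its skeleton coincides with the paper's: both reduce the problem to the geometric stationary law \eqref{geometric} of the AoI chain, rewrite $J$ via \eqref{problem2} as a geometrically weighted series of the one-stage costs, and sandwich that series using the monotone bounds \eqref{inequal} on $c(d,l)$. The genuine difference is in the key analytic ingredient. The paper outsources it: it cites the equivalence \eqref{ref} from Schenato et al.\ (convergence of $\sum_i q^i\,\mathrm{Tr}(\mathbf{Q}\mathbf{H}(i))$ iff $q\rho^2(\mathbf{A})<1$) and applies it to \eqref{J_upbound}--\eqref{J_lowerbound}, which implicitly also requires relating the subsampled sum over $i l_0$ to the full sum via monotonicity of $\mathrm{Tr}(\mathbf{Q}\mathbf{H}(\cdot))$. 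You instead prove the substance of that citation from scratch: the Jordan-form upper bound $\mathrm{Tr}(\mathbf{Q}\mathbf{H}(i))\le c_1\tilde\rho^{2i}$ for any $\tilde\rho>\rho(\mathbf{A})$, and the Gramian-based lower bound $\mathrm{Tr}(\mathbf{Q}\mathbf{H}(i))\ge c_3\rho^{2i}(\mathbf{A})$ obtained from the window identity $\mathbf{H}(i+n)-\mathbf{H}(i)=\mathbf{A}^i\mathbf{H}(n)(\mathbf{A}^i)^\top$ together with positive definiteness of the controllability and observability Gramians. This buys a self-contained argument, an explicit identification of where controllability and observability are actually used, and a clean treatment of the boundary case $g(l)\rho^{2l}(\mathbf{A})=1$ and of the degenerate channels $g(l)\in\{0,1\}$, none of which the paper spells out. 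One harmless slip: for $g(l)=0$ the exact value is $J=\frac{1}{l}\sum_{i=l}^{2l-1}\mathrm{Tr}(\mathbf{Q}\mathbf{H}(i))$ rather than $\mathrm{Tr}(\mathbf{Q}\mathbf{H}(l))$, but this is finite either way so the stability conclusion is unaffected.
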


\begin{proof}
Consider the policy with fixed packet length $l_0 \in \mathbb{N}$, i.e., $\pi(d) = l_0, \forall d \in \mathbb{S}$. From \eqref{packet_update}, it is easy to see that the state space in Section~\ref{sec:semi} is degraded  into $\mathbb{S}=\{l_0,2l_0,3l_0,\cdots\}$.
Since the packet error probability is fixed, it can be proved that the process of the AoI, $\{d_k\}$, has the stationary distribution as
\begin{equation}\label{geometric}
\phi_{\pi}\big(il_0\big) = (1-g(l_0))g(l_0)^{i-1}, i = 1,2,\cdots
\end{equation}

From \eqref{duration} and \eqref{geometric}, it is clear that the denominator of \eqref{problem2} is bounded. Thus, the average cost $J$ is bounded iff the numerator of \eqref{problem2} is.
Using the inequalities below~\cite[Lemma 1]{KangJIoT}, 
\begin{equation}\label{inequal}
\text{Tr}\left(\mathbf{Q}\mathbf{H}(d)\right)\leq c(d,l)
\leq l \text{Tr}\left(\mathbf{Q}\mathbf{H}(d+l)\right),
\end{equation}
we can obtain
\begin{align} \label{J_upbound}
&\sum_{i = 1}^{\infty}c(il_0,l_0) \phi_{\pi}\big(il_0\big)
< 
\frac{l_0(1-g(l_0))}{g^2(l_0)}\sum_{i=2}^{\infty}g^{i}(l_0)\text{Tr}\left(\mathbf{Q}\mathbf{H}(il_0)\right),\\
 \label{J_lowerbound}
&\sum_{i = 1}^{\infty}c(il_0,l_0) \phi_{\pi}\big(il_0\big) \geq (1-g(l_0))\sum_{i=1}^{\infty}g^{i}(l_0)\text{Tr}\left(\mathbf{Q}\mathbf{H}(il_0)\right).
\end{align}


From~\cite{schenato2007foundations}, if $q>0$ and  $(\mathbf{A},\mathbf{\sqrt{R}})$ and $(\mathbf{A},\mathbf{\sqrt{Q}})$ are controllable and observable, respectively, the following property holds:
\begin{equation} \label{ref}
\sum_{i=1}^{\infty} q^i \text{Tr}\left(\mathbf{Q}\mathbf{H}(i)\right)< \infty \text{ iff }
q \rho^2(\mathbf{A})< 1.
\end{equation}
Applying \eqref{ref} to \eqref{J_upbound} and \eqref{J_lowerbound}, it can be obtained that the average cost $J$ is bounded iff $\left(g(l_0)\right)^{1/{l_0}}\rho^2(\mathbf{A})<1$, which completes the proof of Theorem~\ref{theorem_fixed}.
\end{proof}
\begin{remark}
	\normalfont
Theorem~\ref{theorem_fixed} says that the stability condition under the fixed-length policy depends on the packet error probability, the length of the packet and the control system parameter.
The process~\eqref{system_model} can be stabilized, if the packet length $l$ is properly chosen such that both the packet error probability $g(l)$ and the $l$th power of $\rho^2(\mathbf{A})$ are small.
\end{remark}

\subsection{Variable-Length Packet Transmission Policy}
\begin{theorem}[Variable-length scenario]\label{theorem_var}
\normalfont	
Consider the same system and conditions as defined in Theorem~\ref{theorem_fixed}.
There exists a stationary and deterministic variable-length packet transmission policy that can stabilize the dynamic process iff 
\begin{equation}\label{var}
\min_{l\in\mathbb{N}} g(l) \rho^{2l}(\mathbf{A}) <1. 
\end{equation}
\end{theorem}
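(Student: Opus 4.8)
The plan is to prove the two directions separately, reading \eqref{var} as ``there exists $l\in\mathbb{N}$ with $g(l)\rho^{2l}(\mathbf{A})<1$'' and its negation as ``$g(l)\rho^{2l}(\mathbf{A})\ge1$ for every $l\in\mathbb{N}$''. \emph{Sufficiency} is almost immediate: if some $l^\star$ satisfies $g(l^\star)\rho^{2l^\star}(\mathbf{A})<1$, then the constant policy $\pi(d)\equiv l^\star$ is a stationary and deterministic variable-length policy and Theorem~\ref{theorem_fixed} shows it stabilizes the process. (Since the optimal variable-length policy has average cost no larger than that of any fixed-length policy, it stabilizes the process as well.) So this direction needs no new machinery.

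\emph{Necessity} is the substantive part. Assume $g(l)\rho^{2l}(\mathbf{A})\ge1$ for all $l$. Because $g$ is decreasing with $g(l)\le g(1)<1$, the case $l=1$ already forces $\rho(\mathbf{A})>1$, so I may assume this throughout. I will show that \emph{every} stationary deterministic policy $\pi$ has $J_\pi=\infty$, which precludes a stabilizing policy. Fix $\pi$ and consider the embedded Markov chain $\{d_k\}$ of AoI values at packet boundaries, with kernel \eqref{transition_function}. If this chain is not positive recurrent, $\text{Tr}(\mathbf{Q}\mathbf{H}(d_t))$ grows without bound in the Cesàro sense and $J_\pi=\infty$ from \eqref{average_cost}; so assume it admits a stationary law $\phi_\pi$. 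By \eqref{problem2} and the left inequality of \eqref{inequal}, with the denominator of \eqref{problem2} finite and positive, it then suffices to prove $\sum_{d}\text{Tr}(\mathbf{Q}\mathbf{H}(d))\,\phi_\pi(d)=\infty$.

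\emph{The lower-bounding (virtual) step.} Use the exponential test function $V(d)=\rho^{2d}(\mathbf{A})$, which plays the role of a tractable virtual surrogate for the true per-slot cost. Writing $l=\pi(d)$, the kernel \eqref{transition_function} gives
\begin{equation}\label{var_drift}
\mathbb{E}\!\left[V(d_{k+1})\,\middle|\,d_k=d\right]=\big(g(l)\rho^{2l}(\mathbf{A})\big)V(d)+\big(1-g(l)\big)\rho^{2l}(\mathbf{A})\ge V(d)+\big(1-g(1)\big),
\end{equation}
since $g(l)\rho^{2l}(\mathbf{A})\ge1$ and $\rho^{2l}(\mathbf{A})\ge\rho^{2}(\mathbf{A})>1$. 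Averaging \eqref{var_drift} under $\phi_\pi$ and using $\mathbb{E}_{\phi_\pi}[V(d_{k+1})]=\mathbb{E}_{\phi_\pi}[V(d_k)]$ is contradictory unless $\mathbb{E}_{\phi_\pi}[V(d_k)]=\sum_d\rho^{2d}(\mathbf{A})\phi_\pi(d)=\infty$. Then the pointwise counterpart of property \eqref{ref}, namely $\text{Tr}(\mathbf{Q}\mathbf{H}(d))\ge C\rho^{2d}(\mathbf{A})$ for some $C>0$ when $\rho(\mathbf{A})>1$ and $(\mathbf{A},\sqrt{\mathbf{Q}})$ is observable, upgrades this to $\sum_d\text{Tr}(\mathbf{Q}\mathbf{H}(d))\phi_\pi(d)=\infty$, hence $J_\pi=\infty$. (An equivalent route is to follow the all-failures sample path $d^{(0)}=d$, $d^{(j+1)}=d^{(j)}+\pi(d^{(j)})$ out of a recurrent state $d$: its stationary mass is at least $\phi_\pi(d)\prod_{i<j}g(\pi(d^{(i)}))$, so each term $\text{Tr}(\mathbf{Q}\mathbf{H}(d^{(j)}))\phi_\pi(d^{(j)})\gtrsim\phi_\pi(d)\rho^{2d}(\mathbf{A})\prod_{i<j}\big(g(\pi(d^{(i)}))\rho^{2\pi(d^{(i)})}(\mathbf{A})\big)\ge\phi_\pi(d)\rho^{2d}(\mathbf{A})$ stays bounded away from $0$ and the series diverges.)

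\emph{The main obstacle} is the exponential \emph{lower} bound $\text{Tr}(\mathbf{Q}\mathbf{H}(d))\gtrsim\rho^{2d}(\mathbf{A})$: when $\mathbf{A}$ has both stable and unstable modes, cancellations can kill this pointwise unless one allows a sub-exponential (polynomial-in-$d$) slack, so it must be drawn carefully from the controllability of $(\mathbf{A},\sqrt{\mathbf{R}})$ and observability of $(\mathbf{A},\sqrt{\mathbf{Q}})$ --- the ingredients already behind \eqref{ref} --- for instance by aggregating the failure path into blocks of $n$ consecutive steps before applying the estimate. The remaining issues (positive recurrence of $\{d_k\}$ whenever $\pi$ stabilizes, i.e.\ $\mathbb{E}_{\phi_\pi}[\pi(d)]<\infty$, and the degenerate cases $\rho(\mathbf{A})\le1$ or $g(1)=1$, in which \eqref{var} holds trivially) are routine.
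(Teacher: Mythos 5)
Your proof strategy is correct, and the sufficiency direction is identical to the paper's (invoke Theorem~\ref{theorem_fixed} for the constant policy $\pi(d)\equiv l^\star$). For necessity, however, you take a genuinely different route. The paper constructs a dominating \emph{virtual} system: it replaces \eqref{packet_update} by a rule that resets the AoI to $1$ on every success, so that the cost can only decrease, and under that rule the embedded chain becomes a renewal process whose failure path out of state $1$ is deterministic; the numerator of \eqref{problem2} is then lower-bounded by the explicit series \eqref{lb}, whose $i$-th term behaves like $\eta\prod_{j\le i} g(l'_j)\rho^{2l'_j}(\mathbf{A})$ and therefore cannot vanish when every factor is $\ge 1$. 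You instead run a Foster--Lyapunov drift argument with the test function $V(d)=\rho^{2d}(\mathbf{A})$ directly on the true embedded kernel \eqref{transition_function}, concluding $\mathbb{E}_{\phi_\pi}[V]=\infty$ from the stationarity contradiction; your parenthetical ``all-failures sample path'' alternative is essentially the paper's renewal argument without the virtual reduction. Your drift computation is correct, and your approach buys something: it handles the actual reset-to-$l$ dynamics without needing the domination step, and it makes explicit the positive-recurrence caveat that the paper silently assumes when writing \eqref{problem2}. What the paper's construction buys is concreteness --- an explicit divergent series rather than an indirect contradiction. Both proofs hinge on the same unavoidable ingredient, the growth estimate $\text{Tr}(\mathbf{Q}\mathbf{H}(d))\gtrsim\rho^{2d}(\mathbf{A})$ drawn from \eqref{ref}; the paper treats it as an asymptotic ``approximation'' by $\eta\rho^{2i}(\mathbf{A})$, while you explicitly flag the multi-mode cancellation issue and propose block aggregation, so you are, if anything, more careful on the one step where both arguments are least rigorous. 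The degenerate cases you defer ($\rho(\mathbf{A})\le 1$, $g(1)=1$, non-positive-recurrent chains) are genuinely routine and do not hide a gap.
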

\begin{proof}
The sufficiency is easy to prove based on Theorem~\ref{theorem_fixed} as the optimal variable-length policy results in an average cost no higher than that of a fixed-length policy. 

We use a constructive method to prove the necessity.
First, we consider a virtual updating rule below to replace \eqref{packet_update} 
\begin{equation} \label{packet_update2}
d_{t+1} = \begin{cases}
1,& \gamma_t =1\\
d_t + 1,& \text{otherwise}.
\end{cases}
\end{equation}
It is clear that \eqref{packet_update2} is no larger than \eqref{packet_update} for all $t\in \mathbb{N}$, and hence the average cost of the optimal packet-transmission policy by using the updating rule \eqref{packet_update2} is no higher than that of \eqref{packet_update}. Then, we will show that \eqref{var} holds if the average cost induced by a policy is bounded under the condition of \eqref{packet_update2}.

Consider a general policy $\pi'(\cdot)$ that has $\pi'(1)= l'_1$. The state space can be rewritten as $\mathbb{S}=\{1,1+l'_1,1+l'_1+l'_2,1+l'_1+l'_2+l'_3,\cdots\}$, where $l'_k = \pi'(1+\sum_{i=1}^{k-1}l'_i),\forall k\in \{2,3,\cdots\}$.
The average cost function in \eqref{problem2} with packet-transmission policy $\pi'(\cdot)$ can be rewritten as
\begin{equation}\label{problem3}
J= \frac{\sum_{i=1}^{\infty}c'\left(\sum_{j=1}^{i}l'_j\right)\phi'_{\pi'}\left(\sum_{j=1}^{i}l'_j\right)}{\sum_{i=1}^{\infty} \left(\sum_{j=1}^{i}l'_j\right)\phi'_{\pi'}\left(\sum_{j=1}^{i}l'_j\right)},
\end{equation}
where 
\begin{equation} \label{ineq_c}
c'(l) = c(1,l) \geq \text{Tr}\left(\mathbf{Q}\mathbf{H}(l)\right),
\end{equation}
\begin{equation} \label{ineq_phi}
\phi'_{\pi'}\left(\!\sum_{j=1}^{i}l'_j\!\right)\! =\! \prod_{j=1}^{i-1}g(l'_j)\left(1-g(l'_i)\right)
\!\geq\! \left(1-g(1)\right) \!\prod_{j=1}^{i}g(l'_j).
\end{equation}
Since the function $c'(l)$ grows exponentially fast with $l$, $J$ in \eqref{problem3} is bounded iff the numerator is.
Then, by using the inequalities \eqref{ineq_c} and \eqref{ineq_phi}, the numerator of \eqref{problem3} is lower bounded by 
\begin{equation}\label{lb}
\left(1-g(1)\right) \sum_{i=1}^{\infty} \prod_{j=1}^{i}g(l'_j)\text{Tr}(\mathbf{Q}\mathbf{H}\left(\sum_{j=1}^{i}l'_j\right)).
\end{equation}
From \eqref{ref}, it can be proved that 
$\text{Tr}\left(\mathbf{Q}\mathbf{H}(i)\right)$ grows up as fast as $\rho^{2i}(\mathbf{A})$ when~$i \rightarrow \infty$.
Thus, $\text{Tr}\left(\mathbf{Q}\mathbf{H}\left(\sum_{j=1}^{i}l'_j\right)\right)$ can be approximated by $\eta \rho^{2 \left(\sum_{j=1}^{i}l'_j\right)}(\mathbf{A})$ when $i$ is large, where $\eta>0$.
Thus, if \eqref{lb} is bounded, 
$\min_{j\in \mathbb{N}}g(l'_j)\rho^{2l'_j}(\mathbf{A})<1$ holds, which completes the proof of Theorem~\ref{theorem_var}.
\end{proof}
\begin{remark}
	\normalfont
Theorem~\ref{theorem_var} shows that the stability condition of variable-length packet transmission policy is looser than that of a fixed-length policy in Theorem~\ref{theorem_fixed}. The stability condition depends on the function of packet error probability $g(l)$ and also the system parameter $\mathbf{A}$.
\end{remark}

\section{Numerical Results}
In this section, we numerically evaluate the optimal variable-length packet transmission policy in the WNCS and compare it with the fixed-length packet transmission policies.
In order to find the optimal policy, we need to solve the semi-MDP problem with finite state and action spaces. Thus, the infinite state space $\mathbb{S}$ is truncated as $\mathbb{S} = \{1,\cdots,N\}$. 
The action space is $\mathbb{A} = \{1,\cdots,M\}$.
The function of packet error probability in terms of the packet length is approximated by an exponential function as $g(l) = 0.8 \times  0.5^{l-1}$~\cite{tripathi2003reliability,ceran2018average,KangTWC}.
{Unless otherwise stated, we set $N=70$ and $M=5$ for solving the variable-length policy, and consider a scalar system~\cite{LiuJIoT,scalar1,scalar3}, where $\mathbf{A} = 1.2$, $\mathbf{B} = 1$, $\mathbf{R} = 1$, $\mathbf{Q} = 1$~\cite{scalar1}, and thus $\rho(\mathbf{A}) = 1.2$ and $\mathbf{K} = -1.2$.}

\begin{figure}[t]
	\centering
	\includegraphics[scale=0.5]{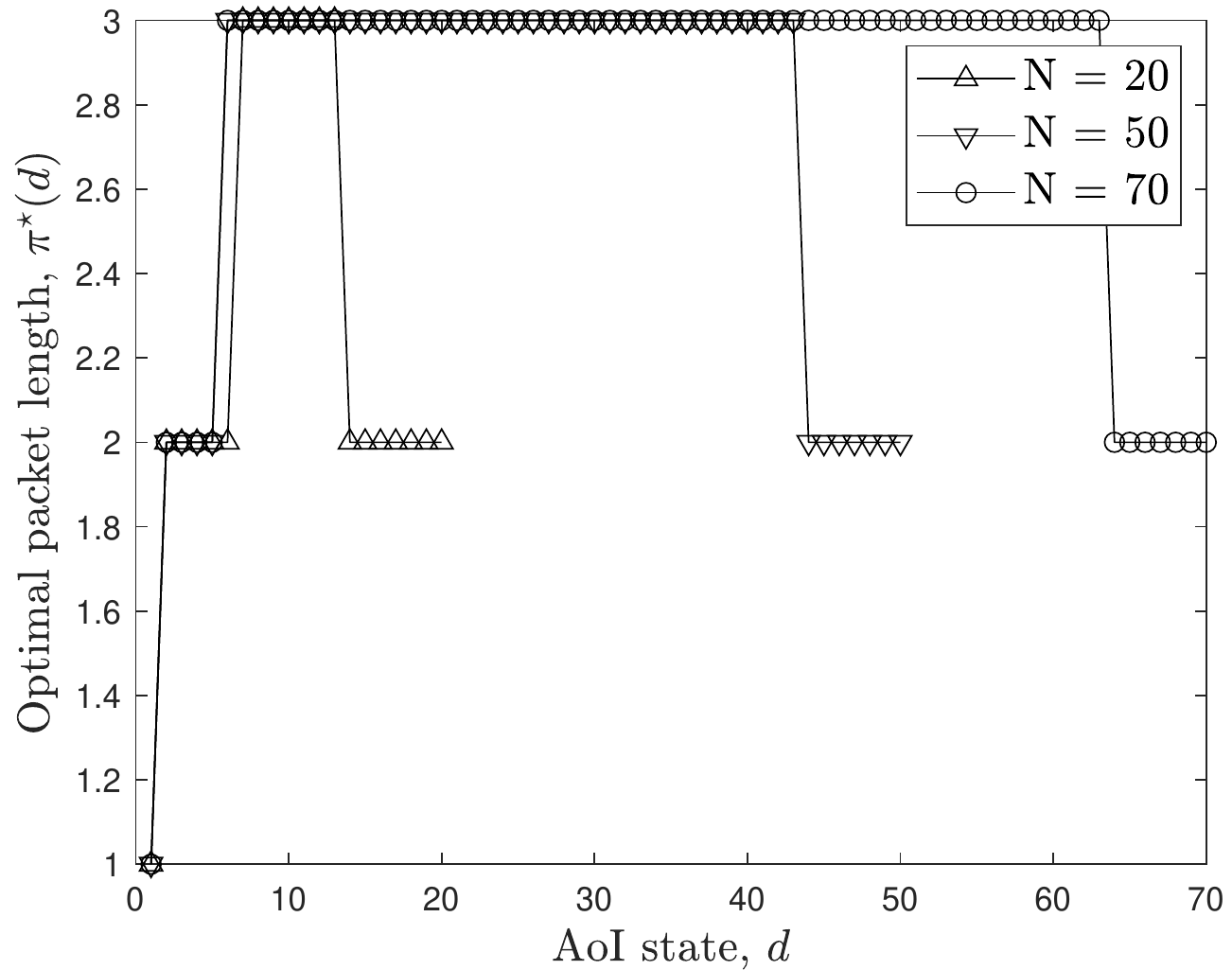}
	\vspace{-0.3cm}
	\caption{The optimal variable-length packet transmission policies within different truncated state spaces.}
	\vspace{-0.3cm}	
	\label{fig:policy}
\end{figure}

Fig.~\ref{fig:policy} shows the optimal packet-transmission policy of the semi-MDP problem with different truncated state-space cardinality $N$. 
It is interesting to see that when the AoI state is small ($d\leq7$), the optimal packet length in different truncated state spaces are almost the same, and the optimal packet length increases with the increasing AoI state.
Also, we see that when the state space is large (i.e., $N=70$), which is closer to the ideal infinite state-space case, the optimal packet length tends to be invariant when $d>7$. Thus, it is reasonable to infer that the optimal policy with the infinite state space has the property that $\pi^\star(d)=3$ when $d> 7$.
The structure of the optimal policy shows that 
when the current system AoI is pretty good, it is wise to take the risk of a transmission with a lower reliability to achieve a good control quality, as the control quality will not be too bad even if the transmission~fails.

\begin{figure}[t]
	\centering
	\includegraphics[scale=0.5]{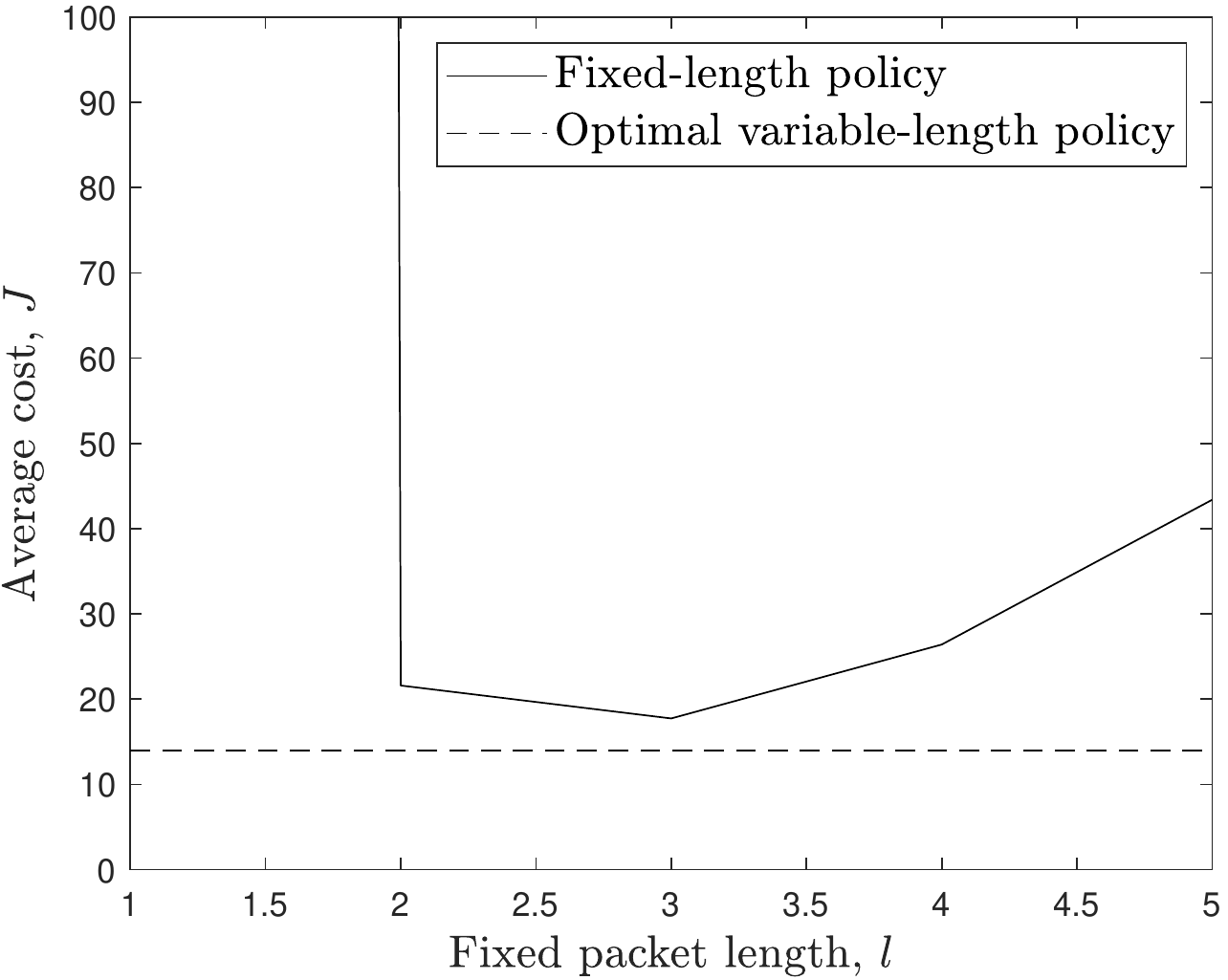}
	\vspace{-0.3cm}	
	\caption{The average cost of fixed-length packet-transmission policy versus the packet length, and the average costs of optimal variable-length policy.}
	\vspace{-0.5cm}	
	\label{fig:comparison}
\end{figure}

Fig. \ref{fig:comparison} plots the average costs of the fixed-length packet transmission policies with different packet lengths and optimal variable-length policy based on \eqref{average_cost} with $T=50,000$. 
From Theorem~\ref{theorem_fixed}, it can be verified that when the fixed packet length $l=1$, which is the conventional transmission policy in most of the existing work~\cite{schenato2007foundations,GatsisOppor,KangJIoT,demirel2017trade,KangTWC,KangICC,LiuJIoT}, the control system is unstable.
From Fig.~\ref{fig:comparison}, the system can be stabilized with longer transmission packets, and the average cost is minimized when the fixed packet length is $3$. 
This optimal fixed-length policy is largely in agreement with the optimal variable-length policy illustrated in Fig.~\ref{fig:policy}, where the optimal packet length is $3$ for most of the states in the state space. 
Also, we see that the optimal variable-length policy gives a $22\%$ average cost reduction of the optimal fixed-length policy, which shows the importance of adaptive packet-transmission design in WNCSs. 

\section{Conclusions}
In this paper, we have proposed and optimized the variable-length packet transmission policy. We have also derived the control-system stability conditions for both the fixed-length and variable-length policies. Our numerical results have demonstrated the superior of the proposed variable-length packet transmission method in wireless control systems.
%
\balance


\end{document}